\begin{document}

\markboth{D. Georgiev}
{Quantum Zeno effect in the brain}

%
\catchline{}{}{}{}{}
%

\title{MONTE CARLO SIMULATION OF QUANTUM ZENO EFFECT IN THE BRAIN}

\author{DANKO GEORGIEV}

\address{Department of Environmental and Occupational Health, Graduate School of Public Health, University
of Pittsburgh, Bridgeside Point, 100 Technology Drive, Suite \#513\\
Pittsburgh, PA 15219, USA\\
ddg17@pitt.edu}

\maketitle

\begin{history}
\end{history}

\begin{abstract}
Environmental decoherence appears to be the biggest obstacle for successful construction of quantum mind theories. Nevertheless, the quantum physicist Henry Stapp promoted the view that the mind could utilize quantum Zeno effect to influence brain dynamics and that the efficacy of such mental efforts would not be undermined by environmental decoherence of the brain. To address the physical plausibility of Stapp's claim, we modeled the brain using quantum tunneling of an electron in a multiple-well structure such as the voltage sensor in neuronal ion channels and performed Monte Carlo simulations of quantum Zeno effect exerted by the mind upon the brain in the presence or absence of environmental decoherence. The simulations unambiguously showed that the quantum Zeno effect breaks down for timescales greater than the brain decoherence time. To generalize the Monte Carlo simulation results for any $n$-level quantum system, we further analyzed the change of brain entropy due to the mind probing actions and proved a theorem according to which local projections cannot decrease the von Neumann entropy of the unconditional brain density matrix. The latter theorem establishes that Stapp's model is physically implausible but leaves a door open for future development of quantum mind theories provided the brain has a decoherence-free subspace.\\ [5 pt]
Appeared in \emph{Int. J. Mod. Phys. B} 2014; DOI: \href{http://dx.doi.org/10.1142/S0217979215500393}{\path{10.1142/S0217979215500393}}
\end{abstract}

\keywords{brain; decoherence; von Neumann entropy.}

{\small PACS numbers: 05.30.-d, 87.17.Aa, 87.19.-j}

\section{Introduction}

The mainstream view in cognitive neurosciences identifies mind states
with physical states realized within the brain \cite{Borst1982,Churchland1989,Churchland2002}.
Major evidence for such mind-brain identity thesis comes from
the clinical examination of patients with brain trauma in which loss
of certain cognitive abilities occurs \cite{Ramachandran} and from
the ability to elicit subjective experiences by direct electric stimulation
of the brain cortex \cite{Dobelle1973,Dobelle1974,Dobelle1976,Dobelle1979,Dobelle2000}.
If, however, one further postulates that the brain states obey the
deterministic laws of classical mechanics, several counterintuitive
results would follow. For example, the intuitively evident propositions
that we have a \emph{free will} allowing us to make choices, or that
our subjective experiences can have a \emph{causal influence} upon
the brain and the surrounding physical world, would turn out to be
nothing but illusions \cite{Wegner2002}. Defending such a viewpoint
seems to be possible, because the physical reality may not conform
to our expectations of what the physical reality should be. Nevertheless,
an increasing number of scientists think that we could construct a
better theory of mind if we take into account quantum physics \cite{Schrodinger1967,Georgiev2006,Georgiev2013,Beck1992,Jibu1995,Penrose2005}.

One of the most elaborate proposals for a quantum theory of mind is
due to Henry Stapp, who suggested that (1) the mind could influence
the dynamics of the brain using \emph{quantum Zeno effect} \cite{Stapp2001,Stapp2005,Schwartz2005,Stapp2007,Stapp2007b},
and (2) the efficacy of the mental efforts would not be undermined by environmentally
induced decoherence of the brain:
\begin{quote}
The quantum Zeno effect is itself a decoherence effect, and it is
not diminished by environmental decoherence. \emph{Thus the decoherence
argument against using quantum mechanics to explain the influence
of conscious thought upon brain activity is nullified}. \cite[pp.101-102]{Stapp2007b}
\end{quote}
If correct, Stapp's model could provide a scientific basis for the
existence of free will \cite{Stapp2001}. Also, it would explain
how mental efforts causally affect and restructure the organization
of the brain in health or psychiatric disease \cite{Schwartz2005}.
Finally, it would establish that environmental decoherence is not
an obstacle for the construction of quantum theories of mind \cite{Stapp2007,Stapp2007b}.

In a previous work \cite{Georgiev2012}, we have argued that if Stapp's
model worked in the presence of environmental decoherence, then mind
efforts would have been capable of exerting paranormal effects upon
nearby physical measuring devices. Our argument was based on a theorem
stating that if mind efforts operate only upon the brain density matrix
$\hat{\rho}$ using projection operators, then the von Neumann entropy
production cannot be negative \cite{Georgiev2012}. Stapp agreed that
the proof of the particular theorem is correct \cite{Stapp2012},
but argued that it presents no harm for his model because: (1) the
studied two-level model system (polarization of a photon) \cite{Georgiev2012}
is too simple to represent a human brain; (2) the studied quantum
Zeno effect \cite{Georgiev2012} was based on no collapse version
of quantum mechanics, which lacks an essential ingredient of Stapp's
model, namely the wavefunction collapse following the mind probing
action; and (3) the mind action onto the brain density matrix does
not need to slow down the environmental decoherence
in order to exert quantum Zeno effect upon the brain \cite{Stapp2012}.

In this work, to address the physical plausibility of Stapp's model, we performed
computer simulations of quantum Zeno effect exerted by the mind upon
a model quantum brain in the presence or absence of environmental
decoherence. The simulations were meticulously constructed according
to the postulates in Stapp's model (\S \ref{sec:2}). The biological
implementation was based on detailed molecular structural data of
voltage-gated ion channel function in brain cortical neurons (\S \ref{sec:3}).
To address Stapp's claim that the wavefunction collapse is a necessary
ingredient for the model to work, we performed Monte Carlo simulations
in the absence (\S \ref{sec:4}) or presence (\S \ref{sec:5})
of environmental decoherence. The results from the Monte Carlo simulations
unambiguously show that the quantum Zeno effect breaks down and the
brain behaves as a `random telegraph' for timescales greater than
the decoherence time of the brain. The point at which the breakdown
of quantum Zeno effect occurs is assessed by the statistical average
outcome of many such Monte Carlo simulation trials. Because the statistical
average is indistinguishable from the brain dynamics simulated within
no collapse version of quantum mechanics (see \S \ref{sec:4} and \S \ref{sec:5})
one can study the quantum Zeno effect in the brain using unconditional
density matrices only, as done in Ref.~\refcite{Georgiev2012}, without the
necessity to explicitly consider collapses and conditional density
matrices. In \S \ref{sec:6}, we prove a theorem according to which
operating only locally on the brain with projection operators cannot
decrease the von Neumann entropy of the unconditional brain density
matrix. This allows us to generalize the results from the Monte Carlo
simulations to any $n$-level quantum system in \S \ref{sec:7}
and show that the failure of Stapp's model is due to basic property
(concavity) of the von Neumann entropy of quantum systems, rather
than our failure to simulate the complexity of the real brain.

In conclusion, the present work shows that the breakdown of quantum
Zeno effect looks differently in Stapp's model (`random telegraph')
compared with no collapse version of quantum mechanics (`smear of
probabilities'). This difference, however, is not enough to make Stapp's
model plausible solution to the problem of environmental decoherence.
Instead, the future development of quantum theories of mind needs
to consider seriously the increase of quantum entropy due to the inevitable
coupling between the brain and its physical environment.

\section{\label{sec:2}Stapp's model}

Stapp describes the interaction between the mind and the brain with
the use of three basic processes 1, 2 and 3, attributed to John von Neumann \cite{vonNeumann1955}. In modern quantum
mechanical terminology these processes can be referred to as (1) projective
measurement, (2) unitary evolution and (3) wavefunction collapse.
Interestingly, Stapp always discusses these processes in the order
2, 1, 3 as they appear in his model of mind-brain interaction.

\subsection{\label{sub:Process2}Process 2}

The brain is considered to be an $n$-level quantum system whose states
belong to the Hilbert space $\mathcal{H}$. Unless the brain interacts
with the mind or the surrounding environment, the brain density matrix
$\hat{\rho}$ evolves according to the Schr\"{o}dinger equation:

\begin{equation}
\imath\hbar\frac{\partial}{\partial t}\hat{\rho}=\left[\hat{H},\hat{\rho}\right]
\end{equation}
where the brackets denote a commutator. If the Hamiltonian $\hat{H}$
is time-independent, the solution of the Schr\"{o}dinger equation
is given by:

\begin{equation}
\hat{\rho}\left(t\right)=e^{-\imath\hat{H}t/\hbar}\hat{\rho}\left(0\right)e^{\imath\hat{H}t/\hbar}\label{eq:Schroedinger}
\end{equation}
According to Stapp, Process 2 generates ``a cloud of possible worlds,
instead of the one world we actually experience'' \cite{Schwartz2005},
or ``a smear of classically alternative possibilities'' \cite{Stapp2007}.
Furthermore, Stapp claims that ``the automatic mechanical Process
2 evolution generates this smearing, and is in principle unable to
resolve or remove it'' \cite{Stapp2007}.

\subsection{Process 1}

The mind is able to perform repeated projective measurements upon
the brain using a freely chosen set of projection operators $\{\hat{P}_{1},\hat{P}_{2},\ldots,\hat{P}_{n}\}$,
which are mutually orthogonal $\hat{P}_{i}\hat{P}_{j}=\delta_{ij}\hat{P}_{j}$
and complete to identity $\sum_{j}\hat{P}_{j}=\hat{I}$. After each
projective measurement the brain density matrix undergoes non-unitary
transition:

\begin{equation}
\hat{\rho}\left(t\right)\rightarrow\sum_{j}\hat{P}_{j}\hat{\rho}\left(t\right)\hat{P}_{j}\label{eq:Stapp-transition}
\end{equation}
According to Stapp the ``Process 1 action extracts from {[}the{]}
jumbled mass of possibilities a particular {[}set{]} of alternative
possibilities'' among which only one is going to be actualized by
the Nature \cite{Stapp2005}.

\subsection{Process 3}

The actualization of only one possibility, from the set of available
possibilities, is done by the Nature. Colloquially, this process is
referred to as ``reduction of the wave packet''
or ``collapse of the wave function''.
Within the more general density matrix formalism, Process 3 is described
by a non-unitary transition that converts the unconditional density
matrix into conditional one (L\"{u}ders rule) \cite{Luders1951,Svensson2013}:

\begin{equation}
\sum_{j}\hat{P}_{j}\hat{\rho}\left(t\right)\hat{P}_{j}\rightarrow\frac{\hat{P}_{k}\hat{\rho}\left(t\right)\hat{P}_{k}}{\textrm{Tr}\left[\hat{P}_{k}\hat{\rho}\left(t\right)\right]}\label{eq:collapse}
\end{equation}
where $\hat{P}_{k}$ is a particular projector from the set $\{\hat{P}_{1},\hat{P}_{2},\ldots,\hat{P}_{n}\}$
selected by the Nature and $\textrm{Tr}\left[\hat{P}_{k}\hat{\rho}\left(t\right)\right]$
is the probability for the state to collapse to that particular state.

Here we remark that throughout this work we use the terms \emph{projective
measurements},\emph{ local projective measurements }and\emph{ local
projections} in the precise mathematical sense of \emph{Process 1}
given by Eq.~\ref{eq:Stapp-transition}. In no collapse models of
quantum mechanics the entanglement between the measured system and
the measuring apparatus results in the transition given by Eq.~\ref{eq:Stapp-transition}
and this is all there is in a physical measurement. In collapse models
of quantum mechanics the measurement is completed only after the entangled
state between the measured system and the measuring apparatus reduces
to a single outcome conditionally purifying the state and disentangling
the measured system from the measuring apparatus according to Eq.~\ref{eq:collapse}.
With our particular choice of terminology we allow the usage of the
term \emph{measurement} when discussing both collapse and no collapse
models of quantum mechanics and avoid trivial identification of \emph{measurement}
with \emph{wavefunction collapse} (for a general introduction to quantum
measurement see Ref.~\refcite{Svensson2013}).

\section{\label{sec:3}Quantum tunneling in a multiple-well as a quantum brain
model}

Biological implementation of Stapp's model should ultimately take
into account the basic electrophysiological processes that occur within
neurons in the human \emph{cerebral cortex} \cite{Kandel2012}. Each
neuron is composed of three different compartments: dendrites, soma
and axon (Fig.~\ref{fig:1}). Typically, the dendrites receive electric
inputs, the soma integrates the dendritic inputs, and the axon outputs
electric spikes that subsequently affect the electric properties of
dendrites of target neurons. At the molecular level, the electric
processes in neurons are regulated by opening and closing of voltage-gated
ion channels that are inserted in the plasma membrane.

\begin{figure}
\centerline{\psfig{file=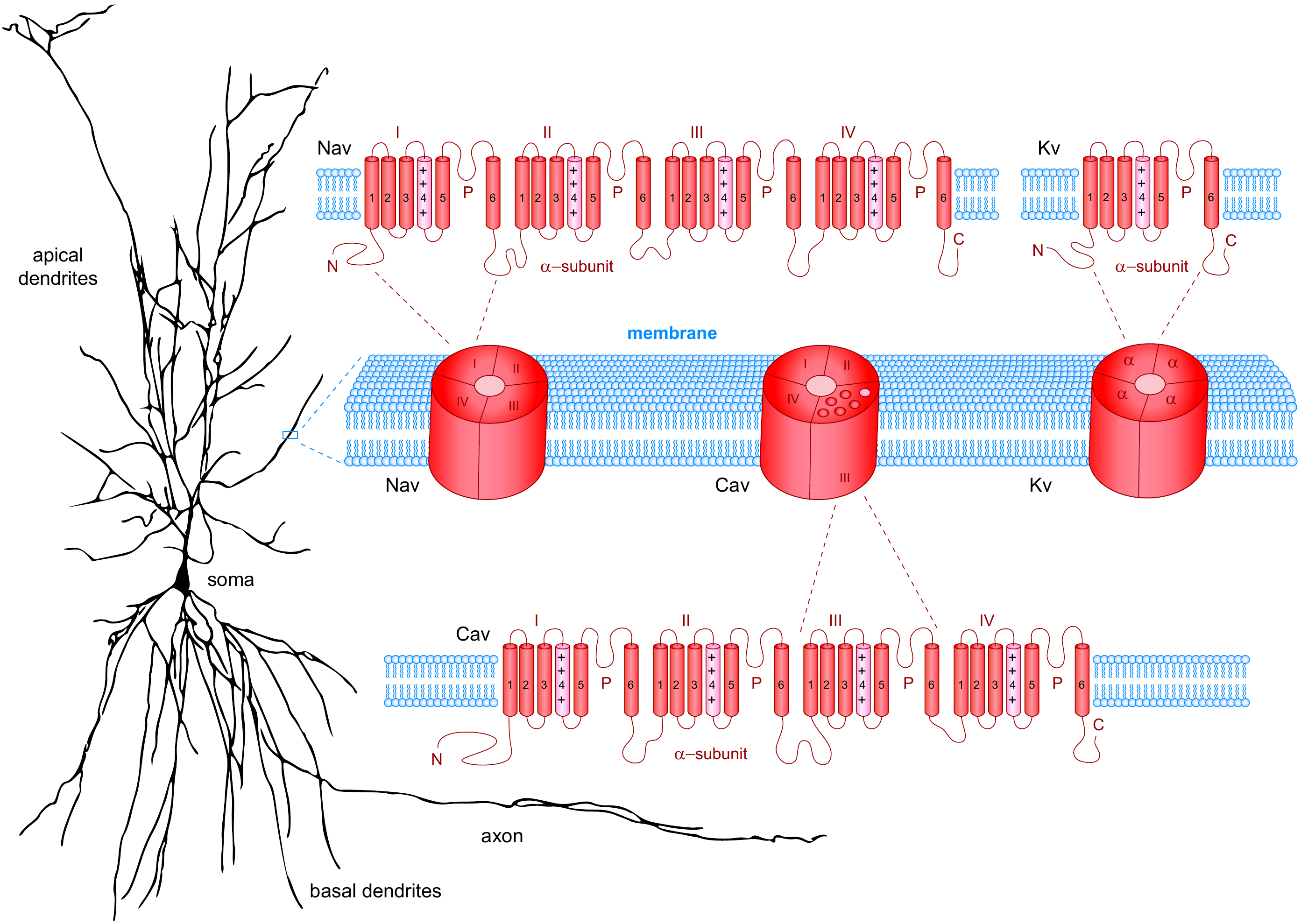,width=118mm}}
\vspace*{8pt}
\caption{\label{fig:1}Neuron morphology and common structure of voltage-gated
ion channels. The dendrites receive electric inputs that summate spatially
and temporally at the soma. If the transmembrane voltage at the axon
initial segment reaches a certain threshold of depolarization the
neuron fires an electric spike that propagates along the axon to affect
the dendrites of target neurons. Neuronal electric properties are
due to opening and closing of sodium (Nav), potassium (Kv) and calcium
(Cav) voltage-gated ion channels. Structurally, each channel is built
of four protein domains I-IV, each of which contains six transmembrane
$\alpha$-helices (1-6). The channel pore is formed by protein loops
(P) located between the 5th and 6th $\alpha$-helices, whereas the
voltage-sensing is performed by the 4th electrically charged $\alpha$-helix
within each domain.}
\end{figure}

Three families of voltage-gated ion channels are most abundant and
physiologically important in neurons: sodium (Nav), potassium (Kv)
and calcium (Cav) ion channels. All the three families of ion channels
share a common evolutionary conserved structure (Fig.~\ref{fig:1}).
Each channel is formed by a pore forming $\alpha$-subunit. The $\alpha$-subunit
of sodium (Nav) and calcium (Cav) channels is composed of four protein
domains I-IV, each of which contains 6 transmembrane $\alpha$-helices.
There is a minor difference in the structure of the potassium (Kv)
channels in which the protein domains I-IV are disconnected from each
other giving rise to four $\alpha$-subunits instead of a single one.
Structurally, the channel pore is formed by four protein loops (P)
located between the 5th and the 6th $\alpha$-helices of the protein
domains I-IV. The voltage-sensing is performed by a charged 4th $\alpha$-helix
of each domain \cite{Bezanilla2000,Bezanilla2005,Elliott2004,Gribkoff2009,Yarov2012,Jensen2012,Li2014}.

Macroscopic electric currents in neurons produced by voltage-gated
ion channels flow continuously across the plasma membrane depending
on the transmembrane voltage and the maximal channel conductance density.
The electric conductance through each individual channel, however,
can take only 2 discrete values: in the open conformation the channel
has a certain characteristic single channel conductance, whereas in
the closed conformation the conductance is zero. Single-channel recordings
have shown that at a given transmembrane voltage each voltage-gated
ion channel undergoes stochastic transitions between open and closed
states characterized by a certain probability for the given channel
type to be in the open conformation \cite{Sakmann1995}. For transmembrane
voltages that are far away from the threshold for generation of electric
spike, the behavior of brain neurons is insensitive to the small stochastic
fluctuations in the potential due to single-channel conformational
transitions. For transmembrane voltages near the threshold of $-55$
mV, however, opening or closing of a single channel can affect the
generation of an electric spike. In the cerebral cortex of humans
there are $\approx1.6\times10^{10}$ neurons \cite{Azevedo2009} and
these neurons can fire spikes with frequencies of $\approx40$ Hz.
Therefore, it is expected that each second in the human cerebral cortex
there are thousands of neurons that are sensitive to the opening or
closing of a single channel. Firing or not firing of these neurons
that are near the voltage threshold may have huge impact on cognitive
processes due to the highly nonlinear character of the cortical neuronal
networks.

Because the opening and closing of the voltage-gated ion channels is a stochastic process controlled by electron motion in the charged 4th $\alpha$-helix of each channel domain, it is biologically feasible to assume that Stapp's quantum Zeno model could be implemented via frequent measurements on the position of an electron in the voltage-sensing 4th $\alpha$-helix of ion channels. Here, we would like to underline the facts that the quantum state of the brain is generally unobservable, and that there is an upper bound on the classical information that can be extracted in a quantum measurement as shown by Alexander Holevo \cite{Holevo1973}. That is why the measured brain observable in the model, namely the position of the electron inside the voltage sensor of a neuronal ion channel, has been chosen to be physiologically paramount for the processing of information by real brains. In addition, the quantum tunneling of an electron in a multiple-well structure has already been considered to be an important toy model for studying quantum Zeno effect \cite{Gagen1993,Altenmuller1994}. For the subsequent numerical simulations we decided to use a symmetric triple-well potential \cite{Cole2008}, however it should be noted that the general algorithm for the simulations could be easily applied for more complex $n$-level systems and/or asymmetric potentials (see the Appendix).

Let us suppose that the normalized position states of the electron
in each of the wells are $|A\rangle$, $|B\rangle$ and $|C\rangle$,
there is no offset energy between different wells, and the tunneling
matrix elements between the states $|A\rangle$ and $|B\rangle$ or
$|B\rangle$ and $|C\rangle$ are equal $\kappa_{12}=\kappa_{23}$.
The Hamiltonian of the system in position basis is \cite{Cole2008}:

\begin{equation}
\hat{H}=\left(\begin{array}{ccc}
0 & -\kappa_{12} & 0\\
-\kappa_{12} & 0 & -\kappa_{23}\\
0 & -\kappa_{23} & 0
\end{array}\right)
\end{equation}
The energy eigenstates of the system are the eigenstates of the Hamiltonian.
If, under suitable choice of units, we set $\kappa_{12}=\kappa_{23}=\frac{1}{\sqrt{2}}$,
the energy eigenstates can be written as:

\begin{eqnarray}
|E_{\pm1}\rangle & = & \frac{1}{2}|A\rangle\mp\frac{1}{\sqrt{2}}|B\rangle+\frac{1}{2}|C\rangle\\
|E_{0}\rangle & = & \frac{1}{\sqrt{2}}|A\rangle-\frac{1}{\sqrt{2}}|C\rangle
\end{eqnarray}
with eigenvalues $E_{\pm1}=\pm1$ and $E_{0}=0$.

With the use of the projectors $\hat{P}_{J}=|J\rangle\langle J|$,
$J\in\{A,B,C\}$, we can express the probability to find the electron
in well $J$ at time $t$ as:

\begin{equation}
p_{J}\left(t\right)=\textrm{Tr}\left[\hat{P}_{J}e^{-\imath\hat{H}t/\hbar}\hat{\rho}\left(0\right)e^{\imath\hat{H}t/\hbar}\right]\label{eq:prob}
\end{equation}
If at $t=0$ the electron is located in well $A$, the initial density
matrix of the system is:

\begin{equation}
\hat{\rho}\left(0\right)=\left(\begin{array}{ccc}
1 & 0 & 0\\
0 & 0 & 0\\
0 & 0 & 0
\end{array}\right)
\end{equation}
and the probabilities for detection in each of the three wells at
time $t$ calculated from Eq.~\ref{eq:prob} are:

\begin{gather}
p_{A}\left(t\right)=\frac{1}{4}\left[\cos\left(t/\hbar\right)+1\right]^{2}\\
p_{B}\left(t\right)=\frac{1}{2}\left[\sin\left(t/\hbar\right)\right]^{2}\\
p_{C}\left(t\right)=\frac{1}{4}\left[\cos\left(t/\hbar\right)-1\right]^{2}
\end{gather}
The electron tunnels coherently forth-and-back from well $A$ to wells
$B$ and $C$ as illustrated in Fig.~\ref{fig:2}. Such behavior
is consistent with previous works \cite{Cole2008,Gagen1993} and shows
that the quantum system in a multiple potential well, if unperturbed,
is able to return periodically to its initial state.

\begin{figure}[t]
\centerline{\psfig{file=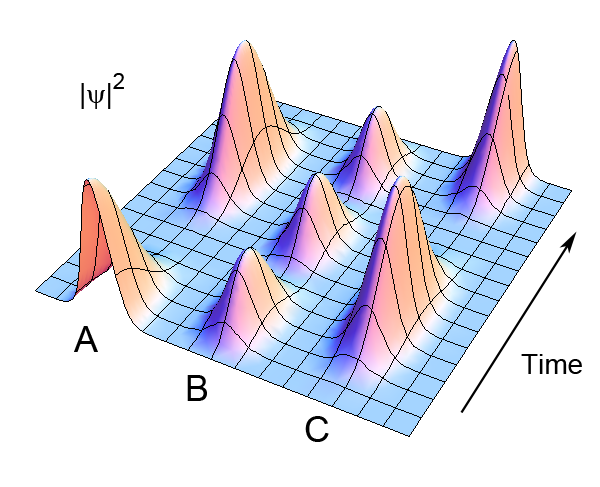,width=76mm}}
\vspace*{8pt}
\caption{\label{fig:2}Quantum tunneling of the brain state in a triple well
potential simulated for a period of time $t=3\pi/\hbar$. At times
$t=2k\pi/\hbar$, $k=0,1,2,\ldots$ the brain state is localized in
well $A$, whereas at times $t=(2k+1)\pi/\hbar$, the brain state
is localized in well $C$. If unperturbed, the quantum system in a
multiple potential well tunnels coherently forth-and-back from well
$A$ to wells $B$ and $C$ and returns periodically to its initial
state. The probability $|\psi|^{2}$ is normalized so that $\int|\psi|^{2}dx=1$.}
\end{figure}

\section{\label{sec:4}Monte Carlo simulation of quantum Zeno effect}

If the electron in the voltage-sensing 4th $\alpha$-helix of an ion
channel is initially in an eigenstate of the position operator, it
would be possible to achieve quantum Zeno effect using projective
measurements in a position basis. Let us suppose that the electron
is initially in well $A$ and we perform repeated projective measurements
with the projectors $\hat{P}_{J}$ at regular time intervals $\xi$.
Between the projective measurements the electron state evolves coherently
according to the Schr\"{o}dinger equation (Process 2). After each
measurement the density matrix $\hat{\rho}$ of the electron in the
quantum brain is diagonalized in the position basis (Process 1). To
achieve ``reduction of the wave packet'' (Process 3), we implement
\emph{weighted Random Choice} in \emph{Wolfram's Mathematica} \emph{9},
where one of the pure state density matrices $\hat{\rho}_{A}=|A\rangle\langle A|$,
$\hat{\rho}_{B}=|B\rangle\langle B|$ or $\hat{\rho}_{C}=|C\rangle\langle C|$
is randomly chosen with corresponding weights $p_{A}\left(\xi\right)$,
$p_{B}\left(\xi\right)$ and $p_{C}\left(\xi\right)$ and the state
of the quantum brain is updated accordingly. In the limit $\xi\to0$,
the electron in the quantum brain stays with probability of $1$ in
its initial state. A single trial from the Monte Carlo simulation
with a non-zero $\xi=\pi/8\hbar$ is shown in Fig.~3a. The quantum
state of the brain remains in the well $A$ for a period of time equal
to $19$ $\xi$-steps before jumping to well $B$, where it stays
until the end of the simulation. We note that while running multiple
Monte Carlo trials the majority of them achieve the intended quantum
Zeno effect, still there are trials in which the state jumps early
to well $B$ and stays there instead of well $A$. The results show
that in the absence of environmental decoherence the quantum Zeno
effect indeed could be achieved with a certain efficiency that is
proportional to the average time for which the decay of the initial
state is suppressed. The time point at which the quantum Zeno effect
breaks down is best visualized on a plot showing the statistical average
of multiple Monte Carlo trials (Fig.~3b). The breakdown of the quantum
Zeno effect could be understood as the drop of the \emph{unconditional
probability} for the system to be in the initial state (in this case
well $A$) to a value $\leq\frac{1}{2}$. Since the efficiency of
any scheme attempting to achieve quantum Zeno effect is estimated
by calculation of unconditional density matrices and unconditional
probabilities, it is clear that the presence of wavefunction collapses
(reductions) is irrelevant and cannot help repairing a faulty quantum
Zeno scheme.

\begin{figure}[t]
\centerline{\psfig{file=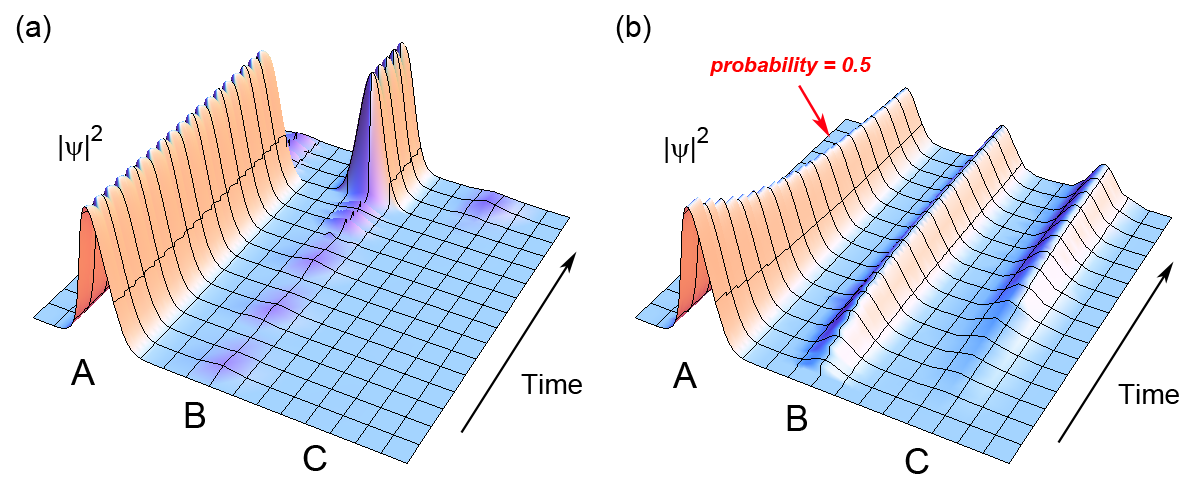,width=118mm}}
\vspace*{8pt}
\caption{Quantum Zeno effect in a triple well potential simulated for a period
of time $t=3\pi/\hbar$. Mind efforts perform projective position
measurements of the brain state at times separated by time interval
$\xi=\pi/8\hbar$ resulting in suppressed evolution of the brain state.
(a) The result from a single Monte Carlo simulation trial successfully
achieving quantum Zeno effect manifested as suppressed decay of the
initial state $|A\rangle$. The quantum state of the brain starts
in well $A$ and remains there for a period of time equal to $19$
$\xi$-steps before jumping to well $B$. (b) Statistical average
of multiple Monte Carlo trials reproduces the probabilities contained
in the unconditional density matrix of the system. This result can
also be interpreted as the state of the multiverse in no-collapse
models of quantum mechanics. Red arrow indicates the time point at
which the quantum Zeno effect breaks down--the probability for the
electron \emph{not to be} in well $A$ is at least equal to the probability
\emph{to be} in well $A$. The probability $|\psi|^{2}$ is normalized
so that $\int|\psi|^{2}dx=1$.}
\end{figure}

\section{\label{sec:5}Decoherence induced breakdown of quantum Zeno effect}

Now, we are ready to address the main claim made by Stapp, whether
mind efforts could achieve quantum Zeno effect upon the brain even
in the presence of environmentally induced decoherence of the brain.
In general, the effect of environmental decoherence is to diagonalize
the density matrix $\hat{\rho}$ in a certain basis (so called pointer
basis) depending on the interaction Hamiltonian $\hat{H}_{\textrm{int}}$
that describes the coupling between the brain and its environment.
Because Stapp claims that his model is robust against the effects
of environmental decoherence, we could choose $\hat{H}_{\textrm{int}}$
such that it diagonalizes the density matrix of the brain $\hat{\rho}$
in a basis different from the position basis. Here, we take the pointer
basis to be the energy basis -- that is we assume that the brain undergoes
dephasing in the process of its interaction with the environment \cite{Venugopalan1998,Hornberger2009}.
Alternatively, identical results will be obtained if we start from
a brain that is in an eigenstate of the energy basis, require the
mind to exert quantum Zeno effect in the energy basis, and consider
environmental decoherence in position basis. If $\tau$ is the decoherence
time of the brain \cite{Tegmark2000}, the action of the environmental
decoherence could be modeled using non-unitary transition:

\begin{equation}
\hat{\rho}\left(\tau\right)\to\sum_{j}\hat{P}_{E_{j}}\hat{\rho}\left(\tau\right)\hat{P}_{E_{j}}
\end{equation}
where the projectors $\hat{P}_{E_{j}}=|E_{j}\rangle\langle E_{j}|$,
$j\in\{\pm1,0\}$. Next, to achieve maximal efficiency of mind efforts,
we let the mind perform repeated projective measurements with the
projectors $\hat{P}_{J}$ at regular time intervals $\xi\rightarrow0$.
The weights in the Random Choice function at times separated by intervals
$\tau$ are as follows: if $\hat{\rho}\left(\tau\right)=\hat{\rho}_{A}$
or $\hat{\rho}\left(\tau\right)=\hat{\rho}_{C}$ then $p_{A}\left(\tau\right)=\frac{3}{8}$,
$p_{B}\left(\tau\right)=\frac{1}{4}$, $p_{C}\left(\tau\right)=\frac{3}{8}$
and if $\hat{\rho}\left(\tau\right)=\hat{\rho}_{B}$ then $p_{A}\left(\tau\right)=\frac{1}{4}$,
$p_{B}\left(\tau\right)=\frac{1}{2}$, $p_{C}\left(\tau\right)=\frac{1}{4}$.
These weights could be calculated from an initial position eigenstate
that undergoes a projective measurement in the energy basis using
the projectors $\hat{P}_{E_{j}}$, followed by a projective measurement
into the position basis using the projectors $\hat{P}_{J}$. A typical
result from the Monte Carlo simulation of mind efforts attempting
to achieve quantum Zeno effect in the presence of environmental decoherence
is shown in Fig.~4a. For timescales greater than the decoherence time
$\tau$, the quantum state of the brain randomly jumps between the
different wells -- a behavior previously described as `random telegraph'
\cite{Gagen1993}. The statistical average of multiple Monte Carlo
trials shown in Fig.~4b does not show the individual random walks produced
by the collapses. Nevertheless the probability distribution in Fig.~4b
unambiguously demonstrates that the breakdown of the quantum Zeno
effect occurs at a timescale comparable with the decoherence time
$\tau$. For $t\geq\tau$ the probability for the quantum brain to
be in any of the potential wells becomes equal to $\frac{1}{3}$ --
a state described by unconditional density matrix with maximal von
Neumann entropy. Realization of a quantum Zeno effect-like walk in
which the state stays in well $A$ for $t=24\tau$ will occur with
probability of $\left(\frac{1}{3}\right)^{23}=1.06\times10^{-11}$.

\begin{figure}[t]
\centerline{\psfig{file=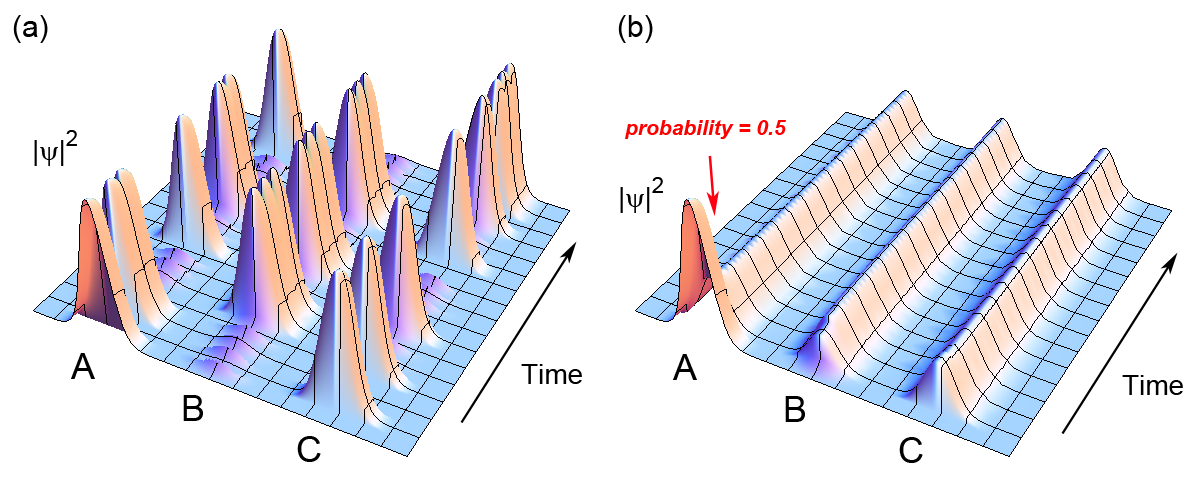,width=118mm}}
\vspace*{8pt}
\caption{Decoherence induced breakdown of quantum Zeno effect simulated for
a period of time $t=24\tau$ during which mind efforts perform projective
position measurements of the brain at times separated by time interval
$\xi\rightarrow0$. (a) The result from a single typical Monte Carlo
simulation trial. Decoherence induces random jumps of the quantum
brain state within the triple well potential--the brain behaves as
a `random telegraph'. The quantum Zeno effect does not persist for
timescales greater than the decoherence time $\tau$. (b) Statistical
average of multiple Monte Carlo trials reproduces the probabilities
contained in the unconditional density matrix of the system. This
result can also be interpreted as the state of the multiverse in no-collapse
models of quantum mechanics. Red arrow indicates the time point at
which the quantum Zeno effect breaks down--the probability for the
electron \emph{not to be} in well $A$ is at least equal to the probability
\emph{to be} in well $A$. The probability $|\psi|^{2}$ is normalized
so that $\int|\psi|^{2}dx=1$.}
\end{figure}

The above results affirm that to study the quantum Zeno effect in
the brain, one does not really need to model the collapse (Process
3) as claimed by Stapp \cite{Stapp2001,Stapp2005,Schwartz2005,Stapp2007,Stapp2012},
because in order to produce the unconditional density matrix one has
to average over the results from individual collapses. If the mind
action is not followed by a collapse, and provided that the mind probing
actions and environmental decoherence do not occur in the same basis
(that is the two bases do not have shared basis vectors), for time
$t\geq\tau$ the unconditional density matrix of the brain will tend
to one with maximal von Neumann entropy (in which the probability
to find the brain in any state is $\frac{1}{n}$) . The result from
a simulation in which the collapse (Process 3) is omitted is identical
with the plot shown in Fig.~4b, and thus provides all the necessary
information needed for one to find the time point at which the \emph{unconditional
probability} for the system to be in the initial state drops to a
value $\leq\frac{1}{2}$.

\section{\label{sec:6}Local projections and brain density matrix}

The Monte Carlo simulations of the $n=3$ level quantum system could
be easily performed on a personal computer. Repeating the algorithm
for large $n$, however, would need the processing power of a supercomputer
and substantial financial and time investment. It is thus desirable
to prove as a theorem that the Monte Carlo results obtained for $n=3$
would hold as well for any $n$. To achieve this we will estimate
the change of the von Neumann entropy of the unconditional brain
density matrix under local projections.
\begin{definition}
The\emph{ }von Neumann entropy\emph{ }of a quantum-mechanical system
described by a density matrix $\hat{\rho}$ is:

\begin{equation}
S(\hat{\rho})=-\textrm{Tr}\left(\hat{\rho}\ln\hat{\rho}\right)=-\sum_{i}\lambda_{i}\ln\lambda_{i}
\end{equation}
\end{definition}
\begin{theorem}
\label{Lemma:1}The von Neumann entropy is invariant under unitary
evolution \cite{Wehrl1978,Bengtsson2006,Ohya2010}:
\begin{equation}
S(\hat{U}\hat{\rho}\hat{U}^{\dagger})=S(\hat{\rho})
\end{equation}
\end{theorem}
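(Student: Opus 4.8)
The plan is to exploit the fact, already implicit in the Definition, that the von Neumann entropy is a \emph{spectral functional}: $S(\hat\rho) = -\sum_i \lambda_i \ln\lambda_i$ depends on $\hat\rho$ only through its eigenvalues $\{\lambda_i\}$, and conjugation by a unitary is a similarity transformation, hence leaves the spectrum unchanged. First I would make this precise by writing the spectral decomposition $\hat\rho = \sum_i \lambda_i |\phi_i\rangle\langle\phi_i|$ with $\{|\phi_i\rangle\}$ an orthonormal eigenbasis of $\hat\rho$; then $\hat U\hat\rho\hat U^\dagger = \sum_i \lambda_i (\hat U|\phi_i\rangle)(\langle\phi_i|\hat U^\dagger)$, and since $\hat U$ is unitary the vectors $|\psi_i\rangle := \hat U|\phi_i\rangle$ again form an orthonormal set. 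This is therefore a legitimate spectral decomposition of $\hat U\hat\rho\hat U^\dagger$ carrying the same list of eigenvalues $\{\lambda_i\}$, and the claim $S(\hat U\hat\rho\hat U^\dagger) = -\sum_i \lambda_i\ln\lambda_i = S(\hat\rho)$ follows at once.

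I would also record the equivalent operator-function argument, which keeps everything at the level of traces. For any polynomial one has $(\hat U\hat\rho\hat U^\dagger)^n = \hat U\hat\rho^n\hat U^\dagger$ because $\hat U^\dagger\hat U = \hat I$; extending through the functional calculus, the same identity holds for $f(x) = x\ln x$, so that $f(\hat U\hat\rho\hat U^\dagger) = \hat U(\hat\rho\ln\hat\rho)\hat U^\dagger$. Cyclicity of the trace then gives
\begin{equation}
S(\hat U\hat\rho\hat U^\dagger) = -\textrm{Tr}\left[\hat U(\hat\rho\ln\hat\rho)\hat U^\dagger\right] = -\textrm{Tr}\left[(\hat\rho\ln\hat\rho)\hat U^\dagger\hat U\right] = -\textrm{Tr}\left[\hat\rho\ln\hat\rho\right] = S(\hat\rho) .
\end{equation}

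There is really only one point requiring care, and it is the expected one: $\ln\hat\rho$ is ill-defined on the kernel of $\hat\rho$ whenever $\hat\rho$ is rank-deficient. I would dispose of this in the usual way by adopting the convention $0\ln 0 = 0$ --- equivalently, restricting all sums to the strictly positive eigenvalues, i.e. working on the support of $\hat\rho$ --- and noting that $\hat U$ maps the support of $\hat\rho$ isometrically onto the support of $\hat U\hat\rho\hat U^\dagger$, so the restricted sums still agree term by term. Beyond fixing this convention no estimates are needed; the theorem is an immediate consequence of the unitary invariance of the spectrum, and it will serve below as the ``Process 2 does nothing to the entropy'' ingredient, to be contrasted with the entropy change produced by the local projections of Process 1.
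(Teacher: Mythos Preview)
Your proof is correct and entirely standard; both the spectral-decomposition argument and the trace-cyclicity argument via functional calculus are valid, and your handling of the $0\ln 0$ convention on the kernel of $\hat\rho$ is the right caveat. Note, however, that the paper itself does not supply a proof of this theorem at all: it simply states the result with citations to Wehrl, Bengtsson--\.{Z}yczkowski, and Ohya--Watanabe, treating it as a known fact from the literature. So there is nothing to compare against --- you have written out what the paper leaves to references, and either of your two arguments would serve perfectly well as the missing justification.
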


\begin{theorem}
\label{Lemma:2}The von Neumann entropy is a concave functional \cite{Wehrl1978,Bengtsson2006,Ohya2010,Delbruck1936,Lieb1975}.
If $p_{1},p_{2},\ldots,p_{n}\geq0$ \textup{and $\sum_{i}p_{i}=1$,
then:}

\begin{equation}
S(\sum_{i}p_{i}\hat{\rho}_{i})\geq\sum_{i}p_{i}S(\hat{\rho}_{i})
\end{equation}
\end{theorem}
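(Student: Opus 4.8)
The plan is to deduce the asserted concavity of $S$ from the \emph{convexity} of the map $\hat{\rho}\mapsto\textrm{Tr}(\hat{\rho}\ln\hat{\rho})=-S(\hat{\rho})$ on the cone of positive semidefinite operators, which is itself the special case $f(x)=x\ln x$ (with the convention $0\ln 0=0$) of the general statement: if $f:[0,\infty)\to\mathbb{R}$ is convex then $\hat{A}\mapsto\textrm{Tr}\,f(\hat{A})$ is convex. By the usual induction on the number of terms it is enough to establish the two-term case, so I would write $\hat{\rho}=p\,\hat{\rho}_{1}+(1-p)\,\hat{\rho}_{2}$ with $0\le p\le1$ and try to prove $\textrm{Tr}\,f(\hat{\rho})\le p\,\textrm{Tr}\,f(\hat{\rho}_{1})+(1-p)\,\textrm{Tr}\,f(\hat{\rho}_{2})$; rearranging and iterating then gives $S(\sum_{i}p_{i}\hat{\rho}_{i})\ge\sum_{i}p_{i}S(\hat{\rho}_{i})$.

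First I would diagonalize the mixture: let $\{|c_{k}\rangle\}$ be an orthonormal eigenbasis of $\hat{\rho}$ with eigenvalues $c_{k}\ge0$, so that $\textrm{Tr}\,f(\hat{\rho})=\sum_{k}f(c_{k})$ while, crucially, $c_{k}=\langle c_{k}|\hat{\rho}|c_{k}\rangle=p\,\langle c_{k}|\hat{\rho}_{1}|c_{k}\rangle+(1-p)\,\langle c_{k}|\hat{\rho}_{2}|c_{k}\rangle$. Scalar convexity of $f$ applied to this convex combination gives $f(c_{k})\le p\,f(\langle c_{k}|\hat{\rho}_{1}|c_{k}\rangle)+(1-p)\,f(\langle c_{k}|\hat{\rho}_{2}|c_{k}\rangle)$. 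Then, for each component $i$, I would use the one-vector Jensen inequality $f(\langle c_{k}|\hat{\rho}_{i}|c_{k}\rangle)\le\langle c_{k}|f(\hat{\rho}_{i})|c_{k}\rangle$, which follows by expanding $|c_{k}\rangle$ in an eigenbasis $\{|a_{j}^{(i)}\rangle\}$ of $\hat{\rho}_{i}$ and applying ordinary Jensen with the probability weights $|\langle c_{k}|a_{j}^{(i)}\rangle|^{2}$. Substituting and summing over $k$ collapses the right-hand side to $p\sum_{k}\langle c_{k}|f(\hat{\rho}_{1})|c_{k}\rangle+(1-p)\sum_{k}\langle c_{k}|f(\hat{\rho}_{2})|c_{k}\rangle=p\,\textrm{Tr}\,f(\hat{\rho}_{1})+(1-p)\,\textrm{Tr}\,f(\hat{\rho}_{2})$, completing the two-term case.

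The only point requiring genuine care --- and the step I expect to be the crux --- is the non-commutativity of $\hat{\rho}$ with the individual $\hat{\rho}_{i}$: one cannot simply feed the eigenvalues into the concave function $-x\ln x$, since the spectra of the mixture and of its components bear no direct relation. The ``double Jensen'' device above (first in the eigenbasis of the mixture, then in an eigenbasis of each component) is exactly what sidesteps this; it is the Peierls--Bogoliubov argument in disguise, and with a little extra bookkeeping on the equality cases of the two Jensen steps one also recovers that equality forces all $\hat{\rho}_{i}$ with $p_{i}>0$ to coincide. As an alternative I would note the slicker route through Klein's inequality, i.e. nonnegativity of the quantum relative entropy $S(\hat{\sigma}\|\hat{\tau})=\textrm{Tr}(\hat{\sigma}\ln\hat{\sigma})-\textrm{Tr}(\hat{\sigma}\ln\hat{\tau})\ge0$: since $\textrm{Tr}(\hat{\rho}\ln\hat{\rho})=\sum_{i}p_{i}\,\textrm{Tr}(\hat{\rho}_{i}\ln\hat{\rho})$ by linearity of the trace, one gets $S(\hat{\rho})-\sum_{i}p_{i}S(\hat{\rho}_{i})=\sum_{i}p_{i}\big[\textrm{Tr}(\hat{\rho}_{i}\ln\hat{\rho}_{i})-\textrm{Tr}(\hat{\rho}_{i}\ln\hat{\rho})\big]=\sum_{i}p_{i}\,S(\hat{\rho}_{i}\|\hat{\rho})\ge0$. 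Both proofs rest on the same non-commutative input, so I would present the Peierls argument as the self-contained one and mention the relative-entropy identity as a one-line alternative.
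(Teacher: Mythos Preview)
Your argument is correct: the Peierls--Bogoliubov ``double Jensen'' step (first in the eigenbasis of the mixture, then in an eigenbasis of each component) is the standard way to handle the non-commutativity, and your alternative via Klein's inequality, $S(\hat{\rho})-\sum_i p_i S(\hat{\rho}_i)=\sum_i p_i\,S(\hat{\rho}_i\|\hat{\rho})\ge 0$, is also sound.

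There is nothing to compare against, however: the paper does not prove this statement. Theorem~\ref{Lemma:2} (like Theorem~\ref{Lemma:1}) is stated as a known result with citations to Wehrl, Bengtsson--\.Zyczkowski, Ohya--Watanabe, Delbr\"uck--Moli\`ere, and Lieb, and is then \emph{used} as a black box in the proof of Theorem~\ref{Theo:1}. Your write-up is essentially the argument one finds in those references (Wehrl's review in particular), so it would serve perfectly well as a self-contained appendix if one wanted to make the paper independent of those citations, but it is not reproducing or deviating from anything the paper itself supplies.
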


We have argued previously that if the mind efforts were to act only
locally at the brain using projection operators, then such action
cannot decrease the von Neumann entropy of the brain density matrix
\cite{Georgiev2012}. Because decoherence increases the von Neumann
entropy over time, it would follow as a corollary that quantum Zeno
effect cannot be achieved via local projections in the presence of
environmental decoherence. Our previous argument, however, utilized
a two-level approximation and a pure initial state. Stapp objected
that the argument is based on an improper extrapolation of a theorem
valid for a two-level system to the much more complicated $n$-level
system of the real brain \cite{Stapp2012}. Here, we prove a generalization
of our previous theorem, which is valid for any $n$-level quantum
system and any purity of the initial density matrix $\hat{\rho}$
of that system.
\begin{theorem}
\label{Theo:1}Local projective measurements upon a quantum system
$\mathbb{Q}$ using a freely chosen set of projection operators $\{\hat{P}_{1},\hat{P}_{2},\ldots,\hat{P}_{n}\}$,
which are mutually orthogonal $\hat{P}_{i}\hat{P}_{j}=\delta_{ij}\hat{P}_{j}$
and complete to identity $\sum_{j}\hat{P}_{j}=\hat{I}$ cannot decrease
the von Neumann entropy of the unconditional density matrix $\hat{\rho}$
of the system $\mathbb{Q}$. Moreover, the entropy cannot be decreased
for any subspace of the density matrix $\hat{\rho}$.\end{theorem}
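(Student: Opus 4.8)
\section*{Proof proposal}

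The plan is to recognize the Process~1 map $\hat{\rho}\mapsto\sum_j\hat{P}_j\hat{\rho}\hat{P}_j$ of Eq.~\ref{eq:Stapp-transition} as an \emph{average of unitary conjugations}, after which Theorems~\ref{Lemma:1} and \ref{Lemma:2} finish the argument in two lines. First I would fix the complete orthogonal family $\{\hat{P}_1,\ldots,\hat{P}_n\}$, introduce the root of unity $\omega=e^{2\pi\imath/n}$, and define the operators $\hat{U}_k=\sum_{j=1}^{n}\omega^{jk}\hat{P}_j$ for $k=0,1,\ldots,n-1$. Each $\hat{U}_k$ is unitary, since orthogonality $\hat{P}_i\hat{P}_j=\delta_{ij}\hat{P}_j$ and completeness $\sum_j\hat{P}_j=\hat{I}$ give $\hat{U}_k\hat{U}_k^{\dagger}=\sum_j|\omega^{jk}|^2\hat{P}_j=\hat{I}$. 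The one computation that matters is the discrete Fourier identity
\begin{equation}
\frac{1}{n}\sum_{k=0}^{n-1}\hat{U}_k\hat{\rho}\hat{U}_k^{\dagger}=\sum_{j,l}\left(\frac{1}{n}\sum_{k=0}^{n-1}\omega^{k(j-l)}\right)\hat{P}_j\hat{\rho}\hat{P}_l=\sum_{j}\hat{P}_j\hat{\rho}\hat{P}_j,
\end{equation}
because the bracketed geometric sum equals $\delta_{jl}$ for indices ranging over $\{1,\ldots,n\}$. This exhibits the diagonalized state as the uniform mixture of the rotated states $\hat{U}_k\hat{\rho}\hat{U}_k^{\dagger}$.

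With this representation the entropy inequality is immediate. Applying concavity (Theorem~\ref{Lemma:2}) with equal weights $p_k=1/n$ and then unitary invariance (Theorem~\ref{Lemma:1}),
\begin{equation}
S\!\left(\sum_j\hat{P}_j\hat{\rho}\hat{P}_j\right)=S\!\left(\frac{1}{n}\sum_{k=0}^{n-1}\hat{U}_k\hat{\rho}\hat{U}_k^{\dagger}\right)\geq\frac{1}{n}\sum_{k=0}^{n-1}S\!\left(\hat{U}_k\hat{\rho}\hat{U}_k^{\dagger}\right)=\frac{1}{n}\sum_{k=0}^{n-1}S(\hat{\rho})=S(\hat{\rho}).
\end{equation}
This holds for every $n$ and for an arbitrary density matrix $\hat{\rho}$, pure or mixed, which is precisely what is needed to meet Stapp's objection that the earlier two-level result does not extrapolate. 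As an independent cross-check one could instead invoke the non-negativity of the quantum relative entropy, noting that $\sum_j\hat{P}_j\hat{\rho}\hat{P}_j$ commutes with every $\hat{P}_j$, so that $S(\hat{\rho}\,\|\,\sum_j\hat{P}_j\hat{\rho}\hat{P}_j)=S(\sum_j\hat{P}_j\hat{\rho}\hat{P}_j)-S(\hat{\rho})\geq 0$; but the mixed-unitary route uses only the two theorems already quoted.

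For the final clause I would rerun the construction inside an arbitrary invariant subspace. Given a subset $\mathcal{S}\subseteq\{1,\ldots,n\}$ with $\hat{P}_{\mathcal{S}}=\sum_{j\in\mathcal{S}}\hat{P}_j$, the projectors $\{\hat{P}_j\}_{j\in\mathcal{S}}$ are themselves mutually orthogonal and sum to the identity on the range of $\hat{P}_{\mathcal{S}}$, so the same Fourier-averaging argument, with unitaries supported on that range, shows that pinching the block $\hat{P}_{\mathcal{S}}\hat{\rho}\hat{P}_{\mathcal{S}}$ by the $\hat{P}_j$, $j\in\mathcal{S}$, cannot lower its von Neumann entropy. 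I expect the only delicate point to be bookkeeping: deciding whether ``entropy of a subspace'' refers to the sub-normalized block or to its trace-one renormalization, and observing that rescaling by the constant $\textrm{Tr}(\hat{P}_{\mathcal{S}}\hat{\rho})$ merely shifts the entropy by an additive term that the pinching leaves untouched. The mathematical heart of the theorem is entirely the observation that pinching is a mixed-unitary (doubly stochastic) channel, established by the displayed identity in one step; everything else is concavity plus unitary invariance.
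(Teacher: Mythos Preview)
Your proof is correct and rests on the same two ingredients the paper uses, namely unitary invariance (Theorem~\ref{Lemma:1}) and concavity (Theorem~\ref{Lemma:2}), but the decomposition is genuinely different. The paper works in the basis where the $\hat{P}_j$ are rank-one diagonal projectors and builds a \emph{chain} $\hat{\rho}_0,\hat{\rho}_1,\ldots,\hat{\rho}_n$: at the $k$th step it averages two unitary conjugates, $\hat{\rho}_k=\tfrac{1}{2}\hat{V}_k\hat{\rho}_{k-1}\hat{V}_k^{\dagger}+\tfrac{1}{2}\hat{I}\hat{\rho}_{k-1}\hat{I}$ with $\hat{V}_k=\hat{I}-2\hat{P}_k$, thereby killing the off-diagonal entries of the $k$th row and column and obtaining $S(\hat{\rho}_k)\geq S(\hat{\rho}_{k-1})$; telescoping gives $S(\hat{\rho}_n)\geq S(\hat{\rho}_0)$. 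You instead exhibit the full pinching in \emph{one shot} as the uniform average of the $n$ unitaries $\hat{U}_k=\sum_j\omega^{jk}\hat{P}_j$ via the discrete Fourier identity, and apply concavity once. Your route is shorter, works verbatim for projectors of arbitrary rank, and is the standard mixed-unitary representation of pinching; the paper's route is more elementary in that it uses only real $\pm1$ phases and a transparent row-by-row picture, and its intermediate matrices make the ``subspace'' clause visually immediate. Both are equally valid proofs of the same inequality.
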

\begin{proof}
Represent the initial density matrix $\hat{\rho}_{0}$ in the basis
in which the projection operators $\{\hat{P}_{1},\hat{P}_{2},\ldots,\hat{P}_{n}\}$
are expressed in their simplest form (a single unit on the diagonal
with zeros elsewhere). If $\hat{\rho}_{0}$ is diagonal in that basis,
it will remain unchanged by the action of the projectors and the entropy
will stay the same. In general, however, $\hat{\rho}_{0}$ will not
be diagonal:

\begin{alignat}{1}
\hat{\rho}_{0} & =\left(\begin{array}{ccccc}
a_{11} & a_{12} & a_{13} & \cdots & a_{1n}\\
a_{12}^{*} & a_{22} & a_{23} & \cdots & a_{2n}\\
a_{13}^{*} & a_{23}^{*} & a_{33} & \cdots & a_{3n}\\
\vdots & \vdots & \vdots & \ddots & \vdots\\
a_{1n}^{*} & a_{2n}^{*} & a_{3n}^{*} & \cdots & a_{nn}
\end{array}\right)\label{eq:initial-matrix}
\end{alignat}
The action of the projectors (see Stapp's Process 1 given by Eq.~\ref{eq:Stapp-transition})
will be to kill all off-diagonal entries:

\begin{equation}
\sum_{j}\hat{P}_{j}\hat{\rho}\left(t\right)\hat{P}_{j}=\hat{\rho}_{n}=\left(\begin{array}{ccccc}
a_{11} & 0 & 0 & \cdots & 0\\
0 & a_{22} & 0 & \cdots & 0\\
0 & 0 & a_{33} & \cdots & 0\\
\vdots & \vdots & \vdots & \ddots & \vdots\\
0 & 0 & 0 & \cdots & a_{nn}
\end{array}\right)
\end{equation}
To show that the entropy of $\hat{\rho}_{n}$ is larger than the entropy
of $\hat{\rho}_{0}$, we construct a chain of inequalities. First,
note that we can evolve unitarily $\hat{\rho}_{0}$ in two different
ways and sum the results to kill all off-diagonal entries in the first
row and column:

\begin{eqnarray}
\hat{\rho}_{1} & = & \left(\begin{array}{ccccc}
a_{11} & 0 & 0 & \cdots & 0\\
0 & a_{22} & a_{23} & \cdots & a_{2n}\\
0 & a_{23}^{*} & a_{33} & \cdots & a_{3n}\\
\vdots & \vdots & \vdots & \ddots & \vdots\\
0 & a_{2n}^{*} & a_{3n}^{*} & \cdots & a_{nn}
\end{array}\right)\nonumber \\
 & = & \frac{1}{2}\left(\begin{array}{ccccc}
-1 & 0 & 0 & \cdots & 0\\
0 & 1 & 0 & \cdots & 0\\
0 & 0 & 1 & \cdots & 0\\
\vdots & \vdots & \vdots & \ddots & \vdots\\
0 & 0 & 0 & \cdots & 1
\end{array}\right)\hat{\rho}_{0}\left(\begin{array}{ccccc}
-1 & 0 & 0 & \cdots & 0\\
0 & 1 & 0 & \cdots & 0\\
0 & 0 & 1 & \cdots & 0\\
\vdots & \vdots & \vdots & \ddots & \vdots\\
0 & 0 & 0 & \cdots & 1
\end{array}\right)+\frac{1}{2}\hat{I}\hat{\rho}_{0}\hat{I}
\end{eqnarray}
Using Theorems \ref{Lemma:1} and \ref{Lemma:2}, we obtain
$S(\hat{\rho}_{1})\geq S(\hat{\rho}_{0})$. Next, we can evolve unitarily
$\hat{\rho}_{1}$ in two different ways and sum the results to kill
its off-diagonal entries in the second row and column:

\begin{eqnarray}
\hat{\rho}_{2} & = & \left(\begin{array}{ccccc}
a_{11} & 0 & 0 & \cdots & 0\\
0 & a_{22} & 0 & \cdots & 0\\
0 & 0 & a_{33} & \cdots & a_{3n}\\
\vdots & \vdots & \vdots & \ddots & \vdots\\
0 & 0 & a_{3n}^{*} & \cdots & a_{nn}
\end{array}\right)\nonumber \\
 & = & \frac{1}{2}\left(\begin{array}{ccccc}
1 & 0 & 0 & \cdots & 0\\
0 & -1 & 0 & \cdots & 0\\
0 & 0 & 1 & \cdots & 0\\
\vdots & \vdots & \vdots & \ddots & \vdots\\
0 & 0 & 0 & \cdots & 1
\end{array}\right)\hat{\rho}_{1}\left(\begin{array}{ccccc}
1 & 0 & 0 & \cdots & 0\\
0 & -1 & 0 & \cdots & 0\\
0 & 0 & 1 & \cdots & 0\\
\vdots & \vdots & \vdots & \ddots & \vdots\\
0 & 0 & 0 & \cdots & 1
\end{array}\right)+\frac{1}{2}\hat{I}\hat{\rho}_{1}\hat{I}
\end{eqnarray}
Again, using Theorems \ref{Lemma:1} and \ref{Lemma:2},
we obtain $S(\hat{\rho}_{2})\geq S(\hat{\rho}_{1})$. Repeating the
construction $n$ times gives us the bound 
\begin{equation}
S(\hat{\rho}_{n})\geq S(\hat{\rho}_{n-1})\geq\ldots\geq S(\hat{\rho}_{2})\geq S(\hat{\rho}_{1})\geq S(\hat{\rho}_{0})
\end{equation}
The above construction can be applied as well for every subspace of
the density matrix from which follows that there is no subspace in
which the change of the entropy is negative after projective measurement.
\end{proof}

\section{\label{sec:7}Implications for Stapp's model}

Wavefunction collapses (Process 3) produce pure states that are described
by conditional density matrices with zero quantum entropy. Conditional
density matrices however cannot assess whether a given quantum Zeno
scheme is plausible or not -- for that one needs unconditional probabilities.
A classical example nicely illustrates this point: suppose that one
buys a lottery ticket that has a 1 in a million chance to win the
jackpot. It is wrong to claim that buying the lottery ticket is an
efficient or plausible way to become a millionaire based on a conditional
reasoning such as `if you happen to win the lottery then there is
an absolute certainty (probability of 1) that you will become a millionaire'.
Instead, the efficiency of buying the lottery ticket is 1 in a million
based on the unconditional probability for the event. Thus, the main
result provided by Theorem \ref{Theo:1} concerning the entropy
of the unconditional density matrix of the brain is exactly tailored
to assess the plausibility of Stapp's quantum Zeno scheme.

The implications of Theorem \ref{Theo:1} for Stapp's model
become clearer if we recognize that the enviromental decoherence is
just a form of quantum Zeno effect in a so-called pointer basis \cite{Joos1984,Joos1985,Joos2007}.
If the decoherence pointer basis happens to be the one in which the
mind action is intended, then the decoherence alone can achieve the
quantum Zeno effect and the mind action will be irrelevant or redundant.
Thus, in order for mind's action to be functionally meaningful it
needs to achieve something different that is not already achieved
by the brain environment.

First, suppose that the brain does not possess \emph{decoherence-free
subspace}. Trivially, quantum Zeno effect can be achieved for states
located within decoherence-free subspace because the environmental
action for these states is absent. However, Stapp explicitly states
that quantum Zeno effect could work even if the brain coherence is
destroyed due to environmental interaction, hence regardless the lack
of decoherence-free subspace. This is also evident in Stapp's own
illustrations \cite[Figs.~11.2-11.6]{Stapp2007} showing that decoherence
reduces the density matrix of the brain to nearly diagonal form in
the position basis. Since the density matrix $\hat{\rho}$ is a Hermitian
operator, it always has a representation basis in which it is diagonal.
If only environmental interaction is considered, for times larger
than the decoherence time $\tau$ one may identify the basis in which
$\hat{\rho}$ is diagonal with the pointer basis of environmental
decoherence.

Second, suppose that the mind intends to keep the probability $p(a_{1},t_{0})>\frac{1}{2}$
of a given brain state $|a_{1}\rangle\langle a_{1}|$ as high as possible
at a later time $t>t_{0}$ in the presence of environmental decoherence.
Breakdown of quantum Zeno effect will be signaled by $p(a_{1},t_{0})]\leq\frac{1}{2}$
because at this point the probability for the brain \emph{not to be}
in the given state is at least equally, if not more, probable than
being in the given state.

We can now show that if the vector $|a_{1}\rangle$ does not belong
to the pointer basis, the repeated local projective measurements
of the brain by the mind using the projector $|a_{1}\rangle\langle a_{1}|$ will
always have probability $p(a_{1},t)$ that is lower or equal to the
probability $p^{\prime}(a_{1},t)$ for the case in which the mind
does not perform any measurement.

Consider an initial brain density matrix that is diagonalized in the
pointer basis. Let the mind perform two projective measurements of
the brain at times $t_{1}$ and $t_{2}$ using the same projector
$|a_{1}\rangle\langle a_{1}|$ (as part of a complete set of projectors)
in an attempt to keep the brain in this state and preserve the probability
$p(a_{1},t)>\frac{1}{2}$ for as long period of time as possible.
From Eq.~\ref{eq:Stapp-transition} it follows that at both times
$t_{1}$ and $t_{2}$ the density matrix $\hat{\rho}$ will be diagonal
in the basis chosen by the mind. From Theorem \ref{Theo:1}
follows that $S\left[\hat{\rho}(t_{2})\right]\geq S\left[\hat{\rho}(t_{1})\right]$
and similar inequalities hold for all subspaces of $\hat{\rho}$. Here we remind that the von Neumann entropy is directly related to the eigenvalues of the density matrix and that the eigenvalues are exhibited on the main diagonal of the density matrix when the matrix is diagonal.
Thus, using the facts that the matrices are diagonal, the function $x\ln x$
is concave in the interval $[0,1]$, and $p(a_{1},t_{0})>\frac{1}{2}$,
we obtain $p(a_{1},t_{2})\leq p(a_{1},t_{1})$. Because repeated local
projections instead of increasing the probability of $|a_{1}\rangle\langle a_{1}|$
over the initial $p(a_{1},t_{0})$ can only speed up the probability
decay, it follows that the mind efforts can only lead to extra decay of
probability in addition to the decay due to environmental decoherence.
Indeed, if there are no shared basis vectors between the pointer basis and
the basis chosen by the mind for the projective measurements, for
$t\geq\tau$ the entropy of the brain density matrix will tend to
the maximal one:

\begin{equation}
S(\hat{\rho})_{\textrm{max}}=S\left[\left(\begin{array}{cccc}
\frac{1}{n} & 0 & \cdots & 0\\
0 & \frac{1}{n} & \cdots & 0\\
\vdots & \vdots & \ddots & \vdots\\
0 & 0 & \cdots & \frac{1}{n}
\end{array}\right)\right]=\ln n
\end{equation}
If the basis chosen by the mind happens to be mutually unbiased with
the decoherence pointer basis (that is the inner product of any two
vectors from the two bases is equal to $\frac{1}{n}$), then the decay
of the initial probability from $p(a_{1},t_{0})$ to $\frac{1}{n}$
will be fastest and will occur immediately after the first projective
measurement performed by the mind.

The above result is devastating for Stapp's model because instead
of achieving quantum Zeno effect, the mind efforts can only speed
up the decay of probability for the brain to be in a given state.
In other words, in the presence of environmental decoherence the best
strategy of the mind is to \emph{withhold} performing any projective
measurements. Because Theorem \ref{Theo:1} follows from the
standard Hilbert space formalism of quantum mechanics, any amendment
done to Stapp's model will necessarily result in a new physical theory
that is inconsistent with quantum mechanics.

\section{Discussion}

The linear unitary time evolution of quantum systems given by the Schr\"{o}dinger equation (Eq.~\ref{eq:Schroedinger}) preserves inner products. This implies, however, that quantum systems prepared in a superposition of states could interact with macroscopic objects such as cats forcing them into a superposition of mutually exclusive alternatives such as ``dead cat'' or ``alive cat'' \cite{Schrodinger1935}. To explain the apparent lack of macroscopic superpositions in the surrounding world, two different approaches have been proposed. The first approach is to counterintuitively accept the linear unitary time evolution as fundamental, and exclude the collapse postulate from the list of fundamental quantum mechanical axioms. In such no collapse models of quantum mechanics all possible outcomes do actually get realized in different universes consistently with the Born rule, and the sum of all decoherent universes forms a multiverse \cite{Everett2012,Zeh2013,Mensky2013}. The second approach is to accept the collapse postulate as a description of an objective physical process and introduce nonunitary time evolution of quantum systems \cite{Melkikh2014,Penrose2014,Ghirardi1986}. In this work, we have considered both types of approaches. First, the results obtained for the unconditional density matrix of the brain could be directly interpreted as describing the state of the multiverse in no collapse models of quantum mechanics. Second, Monte Carlo simulations of collapse models of quantum mechanics were performed using the L\"{u}ders rule (Eq.~\ref{eq:collapse}) that provides an effective mathematical description of the objective collapse regardless of the exact nature of the underlying physical process. Both collapse and no collapse approaches are restrained by the quantum informational theorem \ref{Theo:1} because, if the probabilities for all collapse outcomes are consistent with the Born rule, one could statistically average over all possible outcomes obtained from Eq.~\ref{eq:collapse} to get the very same unconditional density matrix that is predicted by no collapse models.

The mathematical description of the basic postulates entering into
Stapp's model appears to be similar to previous works on foundations
of quantum mechanics \cite{vonNeumann1955,Luders1951}. However, Stapp
surprisingly claimed that his model is robust against environmental
decoherence and that mind efforts could exert quantum Zeno effect
upon the brain for timescales longer than the decoherence time of
the brain $\tau$ \cite{Stapp2001,Stapp2005,Schwartz2005,Stapp2007,Stapp2007b}.
Stapp attributed the putative success of his model to objective collapses
(Process 3) resulting from the mind probing action. To directly test
this claim, we performed Monte Carlo simulations of quantum Zeno effect
in the brain in the presence or absence of environmental decoherence.
The results from the simulations unambiguously show that if the environmental
decoherence happens to be in a basis different from the one chosen
by the mind, the quantum Zeno effect is lost for timescales greater
than the decoherence time $\tau$. Furthermore, the conditional purification
of the brain density matrix due to objective collapses cannot remedy
the faulty quantum Zeno effect attempted by the mind.

Because every attempt to computationally simulate the brain could
be objected on the grounds that it is too simplistic to capture the
full complexity of the real brain, we have shown that the breakdown
of quantum Zeno effect is due to fundamental property (concavity)
of quantum entropy. Namely, repeated projective measurements cannot
decrease the von Neumann entropy $S(\hat{\rho})$ of the unconditional
density matrix of the measured system but may only increase it. The
increase of the entropy is particularly pronounced in the case when
the measured system is subject to environmental decoherence in a basis
different from the one of the performed projective measurements. If
the two bases are mutually unbiased the entropy reaches the maximum
of $\ln n$ within a period of time equal to the decoherence time
$\tau$. Only in the special case when both the mind and the environment
perform projective measurements upon the brain in the same basis,
one could expect to observe quantum Zeno effect for timescales larger
than the decoherence time $\tau$. Such scenario, however, makes the
mind efforts useless from a functional viewpoint, because the action
of the mind becomes redundant with the action of the environment.
In addition, there could be no free will if the mind cannot choose
the basis in which the projective measurement is performed. Since
the efficacy of mind efforts to exert quantum Zeno effect upon the
brain depend strongly on the basis in which the environmental decoherence
occurs, Stapp's model does not appear to be physically
plausible. 

Theorem \ref{Theo:1} sets an important constraint on the future development
of quantum theories of mind and shows that quantum Zeno effect cannot
work if the environmental decoherence affects the whole Hilbert space
of the brain. In the presence of decoherence-free subspace, however,
one could use quantum Zeno effect to combat some of the negative effects
of decoherence on probability decay using unitary operations that
first `hide' the initial state in the decoherence-free subspace and
at a later time restore the state of interest using inverse unitary
operations \cite{Lidar1998,Kim2012}. The latter admittedly speculative
possibility brings us back to the problem of decoherence albeit in
a mitigated form: one needs to explain not why environmental decoherence
does not affect the whole brain, but why environmental decoherence
does not affect a subspace of the brain. Since such an explanation
is likely dependent on the specific architecture of the brain, to
find it a tighter interdisciplinary collaboration and further research
crossing the boundaries of quantum physics and molecular neuroscience
would be needed.

\section*{Acknowledgements}

I would like to thank Professor James F. Glazebrook (Department of
Mathematics, Eastern Illinois University) for helpful comments and
bringing to my attention previously published results on the concavity
of von Neumann entropy.

\appendix{Algorithm for the Monte Carlo simulations}

The main steps in the Monte Carlo simulations of Stapp's model, that
could be used for any $n$-level system and any multiple-well potential,
are:
\begin{enumerate}
\item Express the Hamiltonian $\hat{H}$ of the system in a matrix form
using the position basis.
\item Calculate the (energy) eigenvectors $|E\rangle$ and eigenvalues $E$
of the Hamiltonian $\hat{H}$.
\item Write the projectors in the position basis $\hat{P}_{J}=|J\rangle\langle J|$
(trivial) and the projectors in energy basis $\hat{P}_{E}=|E\rangle\langle E|$.
\item Between projective measurements evolve the density matrix $\hat{\rho}$
using the Schr\"{o}dinger equation (Eq.~\ref{eq:Schroedinger}).
\item At times separated by time steps $\xi$ apply the projectors $\hat{P}_{E}\hat{P}_{J}$
according to Eq.~\ref{eq:Stapp-transition} and discontinuously
update the density matrix to one of the pure density matrices $|J\rangle\langle J|$
using weighted Random Choice function with corresponding weight $p_{J}\left(\xi\right)$.
\item Plot the results graphically.
\end{enumerate}
Even though it is straightforward to implement the algorithm to more
complex $n$-level systems, there will be several drawbacks. First,
the visual comprehension of the plotted results in larger number of
dimensions $n$ becomes difficult and the plots would lose their didactic
utility in explaining why objective wavefunction collapses cannot
help Stapp's quantum Zeno mind-brain model in the presence of environmental
decoherence. Second, for large $n$ one needs to numerically approximate
the energy eigenvectors and eigenvalues instead of providing concise
analytical expressions as in the case of symmetric $3$-level system.
The resulting lengthy numerical formulas will distract the reader
rather than highlight the conceptual issues at stake. Third, any computationally
accessible $n$ would inevitably be considered too small to account
for the complexity of the real brain. To avoid that problem, we have
proven a quantum informational theorem that allows us to apply the
results from the simulations of the symmetric $3$-level system to
an arbitrary $n$-level quantum system.


\begin{thebibliography}{0}

\bibitem{Borst1982}C. V. Borst, \emph{The Mind-Brain Identity Theory} (Macmillan Press, London, 1982).

\bibitem{Churchland1989}P. S. Churchland, \emph{Neurophilosophy: Toward a Unified Science of the Mind-Brain} (MIT Press, Cambridge, Massachusetts, 1989).

\bibitem{Churchland2002}P. S. Churchland, \emph{Brain-Wise: Studies in Neurophilosophy} (The MIT Press, Cambridge, Massachusetts, 2002).

\bibitem{Ramachandran}V. S. Ramachandran, \emph{The Tell-Tale Brain: A Neuroscientist's Quest for What Makes Us Human} (W. W. Norton \& Company, New York, 2010).

\bibitem{Dobelle1973}W. H. Dobelle, S. S. Stensaas, M. G. Mladejovsky, and J. B. Smith, \emph{Annals of Otology, Rhinology, and Laryngology} \textbf{82}, 445 (1973). \href{http://dx.doi.org/10.1177/000348947308200404}{\path{doi:10.1177/000348947308200404}}

\bibitem{Dobelle1974}W. H. Dobelle and M. G. Mladejovsky, \emph{Journal of Physiology} \textbf{243}, 553 (1974).

\bibitem{Dobelle1976}W. H. Dobelle, M. G. Mladejovsky, J. R. Evans, T. S. Roberts, and J. P. Girvin, \emph{Nature} \textbf{259}, 111 (1976). \href{http://dx.doi.org/10.1038/259111a0}{\path{doi:10.1038/259111a0}}

\bibitem{Dobelle1979}W. H. Dobelle, D. O. Quest, J. L. Antunes, T. S. Roberts, and J. P. Girvin, \emph{Neurosurgery} \textbf{5}, 521 (1979). \href{http://dx.doi.org/10.1097/00006123-197910000-00022}{\path{doi:10.1097/00006123-197910000-00022}}

\bibitem{Dobelle2000}W. H. Dobelle, \emph{ASAIO Journal} \textbf{46}, 3 (2000). \href{http://dx.doi.org/10.1097/00002480-200001000-00002}{\path{doi:10.1097/00002480-200001000-00002}}

\bibitem{Wegner2002}D. M. Wegner, \emph{The Illusion of Conscious Will} (The MIT Press, Cambridge, Massachusetts, 2002).

\bibitem{Schrodinger1967}E. Schr\"{o}dinger, \emph{What Is Life?} (Cambridge University Press, Cambridge, 1967).

\bibitem{Georgiev2006}D. Georgiev and J. F. Glazebrook, \emph{Informatica (Slovenia)} \textbf{30}, 221 (2006).

\bibitem{Georgiev2013}D. Georgiev, \emph{Axiomathes} \textbf{23}, 683 (2013). \href{http://dx.doi.org/10.1007/s10516-012-9204-1}{\path{doi:10.1007/s10516-012-9204-1}}

\bibitem{Beck1992}F. Beck and J. C. Eccles, \emph{Proceedings of the National Academy of Sciences} \textbf{89}, 11357 (1992).

\bibitem{Jibu1995}M. Jibu and K. Yasue, \emph{Quantum Brain Dynamics and Consciousness: An introduction} (John Benjamins, Amsterdam, Philadelphia, 1995).

\bibitem{Penrose2005}R. Penrose, \emph{Shadows of the Mind: A Search for the Missing Science of Consciousness} (Vintage, London, 2005).

\bibitem{Stapp2001}H. P. Stapp, \emph{Foundations of Physics} \textbf{31}, 1465 (2001). \href{http://dx.doi.org/10.1023/A:1012682413597}{\path{doi:10.1023/A:1012682413597}}

\bibitem{Stapp2005}H. P. Stapp, \emph{Journal of Consciousness Studies} \textbf{12}, 43 (2005).

\bibitem{Schwartz2005}J. M. Schwartz, H. P. Stapp, and M. Beauregard, \emph{Philosophical Transactions of the Royal Society of London B} \textbf{360}, 1309 (2005). \href{http://dx.doi.org/10.1098/rstb.2004.1598}{\path{doi:10.1098/rstb.2004.1598}}

\bibitem{Stapp2007}H. P. Stapp, \emph{Mindful Universe: Quantum Mechanics and the Participating Observer} (Springer, Berlin, 2007).

\bibitem{Stapp2007b}H. P. Stapp, \emph{Mind \& Matter} \textbf{5}, 83 (2007).

\bibitem{Georgiev2012}D. Georgiev, \emph{NeuroQuantology} \textbf{10}, 374 (2012). \href{http://dx.doi.org/10.14704/nq.2012.10.3.552}{\path{doi:10.14704/nq.2012.10.3.552}}

\bibitem{Stapp2012}H. P. Stapp, \emph{NeuroQuantology} \textbf{10}, 601 (2012). \href{http://dx.doi.org/10.14704/nq.2012.10.4.619}{\path{doi:10.14704/nq.2012.10.4.619}}

\bibitem{vonNeumann1955}J. von Neumann, \emph{Mathematical Foundations of Quantum Mechanics} (Princeton University Press, Princeton, 1955).

\bibitem{Luders1951}G. L\"{u}ders, \emph{Annalen der Physik} \textbf{443}, 322 (1951). \href{http://dx.doi.org/10.1002/andp.19504430510}{\path{doi:10.1002/andp.19504430510}}

\bibitem{Svensson2013}B. E. Y. Svensson, \emph{Quanta} \textbf{2}, 18 (2013). \href{http://dx.doi.org/10.12743/quanta.v2i1.12}{\path{doi:10.12743/quanta.v2i1.12}}

\bibitem{Kandel2012}E. R. Kandel, J. H. Schwartz, T. M. Jessell, S. A. Siegelbaum, and A. J. Hudspeth, \emph{Principles of Neural Science} (McGraw-Hill Professional, New York, 2012).

\bibitem{Bezanilla2000}F. Bezanilla, \emph{Physiological Reviews} \textbf{80}, 555 (2000).

\bibitem{Bezanilla2005}F. Bezanilla, \emph{Trends in Biochemical Sciences} \textbf{30}, 166 (2005). \href{http://dx.doi.org/10.1016/j.tibs.2005.02.006}{\path{doi:10.1016/j.tibs.2005.02.006}}

\bibitem{Elliott2004}D. J. S. Elliott, E. J. Neale, Q. Aziz, J. P. Dunham, T. S. Munsey, M. Hunter, and A. Sivaprasadarao, \emph{EMBO Journal} \textbf{23}, 4717 (2004). \href{http://dx.doi.org/10.1038/sj.emboj.7600484}{\path{doi:10.1038/sj.emboj.7600484}}

\bibitem{Gribkoff2009}V. K. Gribkoff and L. K. Kaczmarek, \emph{Structure, Function and Modulation of Neuronal Voltage-Gated Ion Channels} (Wiley, Hoboken, New Jersey, 2009).

\bibitem{Yarov2012}V. Yarov-Yarovoy et al., \emph{Proceedings of the National Academy of Sciences} \textbf{109}, E93 (2012). \href{http://dx.doi.org/10.1073/pnas.1118434109}{\path{doi:10.1073/pnas.1118434109}}

\bibitem{Jensen2012}M. {\O}. Jensen, V. Jogini, D. W. Borhani, A. E. Leffler, R. O. Dror, and D. E. Shaw, \emph{Science} \textbf{336}, 229 (2012). \href{http://dx.doi.org/10.1126/science.1216533}{\path{doi:10.1126/science.1216533}}

\bibitem{Li2014}Q. Li et al., \emph{Nature Structural \& Molecular Biology} \textbf{21}, 244 (2014). \href{http://dx.doi.org/10.1038/nsmb.2768}{\path{doi:10.1038/nsmb.2768}}

\bibitem{Sakmann1995}B. Sakmann and E. Neher, \emph{Single-Channel Recording} (Springer, New York, 1995).

\bibitem{Azevedo2009}F. A. C. Azevedo, L. R. B. Carvalho, L. T. Grinberg, J. M. Farfel, R. E. L. Ferretti, R. E. P. Leite, W. J. Filho, R. Lent, and S. Herculano-Houzel, \emph{Journal of Comparative Neurology} \textbf{513}, 532 (2009). \href{http://dx.doi.org/10.1002/cne.21974}{\path{doi:10.1002/cne.21974}}

\bibitem{Holevo1973}A. S. Holevo. \emph{Problems of Information Transmission} \textbf{9}, 177 (1973).

\bibitem{Gagen1993}M. J. Gagen, H. M. Wiseman, and G. J. Milburn, \emph{Physical Review A} \textbf{48}, 132 (1993). \href{http://dx.doi.org/10.1103/PhysRevA.48.132}{\path{doi:10.1103/PhysRevA.48.132}}

\bibitem{Altenmuller1994}T. P. Altenm\"{u}ller and A. Schenzle, \emph{Physical Review A} \textbf{49}, 2016 (1994). \href{http://dx.doi.org/10.1103/PhysRevA.49.2016}{\path{doi:10.1103/PhysRevA.49.2016}}

\bibitem{Cole2008}J. H. Cole, A. D. Greentree, L. C. L. Hollenberg, and S. Das Sarma, \emph{Physical Review B} \textbf{77}, 235418 (2008). \href{http://dx.doi.org/10.1103/PhysRevB.77.235418}{\path{doi:10.1103/PhysRevB.77.235418}}

\bibitem{Venugopalan1998}A. Venugopalan, \emph{Pramana} \textbf{51}, 625 (1998). \href{http://dx.doi.org/10.1007/BF02827454}{\path{doi:10.1007/BF02827454}}

\bibitem{Hornberger2009}K. Hornberger, in \emph{Entanglement and Decoherence}, edited by A. Buchleitner, C. Viviescas, and M. Tiersch (Springer, Berlin, 2009), pp. 221. \href{http://dx.doi.org/10.1007/978-3-540-88169-8_5}{\path{doi:10.1007/978-3-540-88169-8_5}}

\bibitem{Tegmark2000}M. Tegmark, \emph{Physical Review E} \textbf{61}, 4194 (2000). \href{http://dx.doi.org/10.1103/PhysRevE.61.4194}{\path{doi:10.1103/PhysRevE.61.4194}}

\bibitem{Wehrl1978}A. Wehrl, \emph{Reviews of Modern Physics} \textbf{50}, 221 (1978). \href{http://dx.doi.org/10.1103/RevModPhys.50.221}{\path{doi:10.1103/RevModPhys.50.221}}

\bibitem{Bengtsson2006}I. Bengtsson and K. Zyczkowski, \emph{Geometry of Quantum States. An Introduction to Quantum Entanglement} (Cambridge University Press, Cambridge, 2006).

\bibitem{Ohya2010}M. Ohya and N. Watanabe, \emph{Entropy} \textbf{12}, 1194 (2010). \href{http://dx.doi.org/10.3390/e12051194}{\path{doi:10.3390/e12051194}}

\bibitem{Delbruck1936}M. Delbr{\"u}ck and G. Moli{\`e}re, \emph{Statistische Quantenmechanik und Thermodynamik} (Verlag der Akademie der Wissenschaften, Berlin, 1936).

\bibitem{Lieb1975}E. H. Lieb, \emph{Bulletin of the American Mathematical Society} \textbf{81}, 1 (1975).

\bibitem{Joos1984}E. Joos, \emph{Physical Review D} \textbf{29}, 1626 (1984). \href{http://dx.doi.org/10.1103/PhysRevD.29.1626}{\path{doi:10.1103/PhysRevD.29.1626}}

\bibitem{Joos1985}E. Joos and H.-D. Zeh, \emph{Zeitschrift f{\"u}r Physik B} \textbf{59}, 223 (1985). \href{http://dx.doi.org/10.1007/BF01725541}{\path{doi:10.1007/BF01725541}}

\bibitem{Joos2007}E. Joos, \emph{Physics \& Philosophy} \textbf{2}, 10 (2007).

\bibitem{Schrodinger1935}E. Schr\"{o}dinger. \emph{Naturwissenschaften} \textbf{23}, 807 (1935). \href{http://dx.doi.org/10.1007/bf01491891}{\path{doi:10.1007/bf01491891}}

\bibitem{Everett2012}H. Everett III, The Everett Interpretation of Quantum Mechanics: Collected Works 1955-1980 with Commentary, J. A. Barrett and P. Byrne (editors) (Princeton University Press, Princeton, 2012).

\bibitem{Zeh2013}H.-D. Zeh. \emph{NeuroQuantology} \textbf{11}, 97 (2013). \href{http://dx.doi.org/10.14704/nq.2013.11.1.645}{\path{doi:10.14704/nq.2013.11.1.645}}

\bibitem{Mensky2013}M. B. Mensky. \emph{NeuroQuantology} \textbf{11}, 85 (2013). \href{http://dx.doi.org/10.14704/nq.2013.11.1.635}{\path{doi:10.14704/nq.2013.11.1.635}}

\bibitem{Melkikh2014}A. V. Melkikh, \emph{Biosystems} \textbf{119}, 10 (2014). \href{http://dx.doi.org/10.1016/j.biosystems.2014.03.005}{\path{doi:10.1016/j.biosystems.2014.03.005}}

\bibitem{Penrose2014}R. Penrose, \emph{Foundations of Physics} \textbf{44}, 557 (2014). \href{http://dx.doi.org/10.1007/s10701-013-9770-0}{\path{doi:10.1007/s10701-013-9770-0}}

\bibitem{Ghirardi1986}G.~C. Ghirardi, A. Rimini, T. Weber. \emph{Physical Review D} \textbf{34}, 470 (1986). \href{http://dx.doi.org/10.1103/PhysRevD.34.470}{\path{doi:10.1103/PhysRevD.34.470}}

\bibitem{Lidar1998}D. A. Lidar, I. L. Chuang, and K. B. Whaley, \emph{Physical Review Letters} \textbf{81}, 2594 (1998). \href{http://dx.doi.org/10.1103/PhysRevLett.81.2594}{\path{doi:10.1103/PhysRevLett.81.2594}}

\bibitem{Kim2012}Y.-S. Kim, J.-C. Lee, O. Kwon, and Y.-H. Kim, \emph{Nature Physics} \textbf{8}, 117 (2012). \href{http://dx.doi.org/10.1038/nphys2178}{\path{doi:10.1038/nphys2178}}

\end{thebibliography}
\end{document}